\newcommand{\be}{\begin{equation}}
\newcommand{\ee}{\end{equation}}
\newcommand{\ba}{\begin{eqnarray}}
\newcommand{\ea}{\end{eqnarray}}
\newtheorem{theorem}{Theorem}
\newtheorem{Lemma}{Lemma}
\begin{document}

\title{Local marginals ameliorate device independent witnessing of genuine entanglement}

\author{Some Sankar Bhattacharya}
\affiliation{Physics and Applied Mathematics Unit, Indian Statistical Institute, 203 B. T. Road, Kolkata 700108, India.}

\author{Biswajit Paul}
\affiliation{Department of Mathematics, South Malda College, Malda, West Bengal, India.}

\author{Arup Roy}
\affiliation{Physics and Applied Mathematics Unit, Indian Statistical Institute, 203 B. T. Road, Kolkata 700108, India.}

\author{Amit Mukherjee}
\affiliation{Physics and Applied Mathematics Unit, Indian Statistical Institute, 203 B. T. Road, Kolkata 700108, India.} 

\author{C. Jebaratnam}
\affiliation{S. N. Bose National Centre for Basic Sciences, Salt Lake, Kolkata 700 098, India}

\author{Manik Banik}
\affiliation{Optics \& Quantum Information Group, The Institute of Mathematical Sciences, HBNI, C.I.T Campus, Tharamani, Chennai 600 113, India.}


\begin{abstract}
We consider the problem of determining the presence of genuine multipartite entanglement through the violation of Mermin's Bell-type inequality (MI). Though the violation of MI cannot certify the presence of genuine nonlocality, but can certify genuine tripartite entanglement whenever the violation is strictly greater than $2\sqrt{2}$. 
Here we show that MI suffices as genuine entanglement witness even when its value is $2\sqrt{2}$ if at least two of the local marginal distributions are not completely random provided the local Hilbert space dimension of at least one of the sub-systems is two. Thus local marginals suffice as semi-device independent genuine entanglement witness. 
This is intriguing in a sense, as the local properties of a composite system can help to identify its global property. Furthermore, analyzing another quantity constructed from Mermin polynomials we show that genuine entanglement certification task for the correlations with MI violation equal to $2\sqrt{2}$ can actually be made fully device independent. 
       
\end{abstract}

\pacs{03.65.Ud, 03.67.Mn}

\maketitle


Entanglement  is considered to be one of the most bizarre nonclassical manifestations of multipartite quantum systems. Linearity of quantum mechanics (QM) allows to build superposed states that cannot be written as product of states of each subsystems and hence resulting in entangled states. Entanglement lies at the core of some of the most puzzling features of QM: the Einstein-Podolski-Rosen (EPR) argument \cite{EPR}, the Schrodinger's \emph{steering} concept \cite{schr}, and most importantly the \emph{nonlocal} behaviour of QM \cite{Bell1964}. In the past three decades quantum entanglement has also been established as a useful resource for several information theoretic tasks: quantum cryptography \cite{crypto}, quantum teleportation \cite{Bennett93}, quantum superdense coding \cite{Bennett92} are few noteworthy among many others.   

While the state of a quantum system composed of only two subsystems can be either separable or entangled, for a quantum system with more than two subsystems the separability properties have a complicated structure. In the multipartite scenario the most dramatic attribute appearing in the picture is the concept of truly or genuinely multipartite entanglement (GME) which cannot be prepared by mixing states that are separable with respect to some partition \cite{GHZ,Dur,Horodecki'09,Guh}. Whereas Greenberger, Horne, and Zeilinger established implication of GME in quantum foundations by reveling \emph{perfect} incompatibility of QM with EPR idea of \emph{local deterministic} world view \cite{GHZ}, significant advantages of it compared to bipartite entanglement has also been established in different practical tasks, namely, extreme spin squeezing \cite{Sor}, high sensitivity in some general metrology tasks \cite{Hyl}, quantum computing using cluster states \cite{Rau}, measurement-based quantum computation \cite{Bri}, various quantum communication protocols \cite{Mur}, secret sharing among multiple parties and multiparty quantum network \cite{Buz}. Despite its importance, characterization and detection of entanglement is quite difficult. Several methods such as tomography of the full state\cite{Alt}, constructing linear and/or nonlinear entanglement witnesses \cite{Guh}, or observing the violation of Bell-like inequalities\cite{Mor}, have been proposed for verification or certification of presence of GME. However we are still far from understanding multipartite entanglement completely.  

Certifying the presence of GME by some nonlocality based arguments, i.e., observing violation of some Bell type inequalities has an advantage over full state tomography or constructing usual witness operators. Whereas in the later two methods one must require perfect experimental devices, contrarily, nonlocality based arguments are dependent on observed measurement statistics only, without relying on the types of measurements performed, the precision involved in their implementation, or on assumptions about the relevant Hilbert space dimension. Such a certification of GME witness is known as device-independent (DI) certification. Several results have been reported in the recent past where the question of witnessing  DI-GME has been addressed \cite{See,Nag,Uff,SI,Sev,Ban,Pal}. The basic idea of DI certification is to check whether outcome statistics of different measurements performed locally by each party on the given unknown multipartite state violate some genuine multipartite bell-type inequality or not.
For the tripartite case one example of such inequality is the well known Svetlichny’s inequality \cite{SI}. If a tripartite input-output correlation violates Svetlichny’s inequality then the correlation is genuinely nonlocal and the quantum state producing such correlation must be genuinely entangled. However, the relation between DI genuine entanglement witness (DI-GEW) and multipartite nonlocality is more subtle. For example, violation of Mermin's Bell-type inequality (MI) \cite{Mer}, i.e., $\langle \mathbb{M}\rangle\le 2$, cannot be used to certify the presence of genuine tripartite nonlocality, but, it can reveal tripartite genuine entanglement whenever $\langle \mathbb{M}\rangle>2\sqrt{2}$ \cite{See}. The MI therefore being a two-way nonlocal witness can suffice as DI-GME. In \cite{Ban}, Bancal et al. have also introduced an $n$-partite inequality which cannot reveal genuine multipartite nonlocality but can suffice as DI witness for GME.

In this work we explore the possibility of witnessing genuine entanglement from MI even when its value is $\langle M\rangle=2\sqrt{2}$. We find that tripartite genuine entanglement can be certified for $\langle \mathbb{M}\rangle=2\sqrt{2}$ if at least two of the local marginals are not completely random, provided at least one of the subsystems dimensions of the tripartite system is two. In other word, local marginals can help in semi-device independent witnessing of genuine entanglement for Mermin value $\langle \mathbb{M}\rangle=2\sqrt{2}$. We then show that analyzing an other quantity, obtained from the correlation only and introduced recently in \cite{Jebaratnam} by one of our authors, it is even possible to witness the tripartite genuine entanglement in a fully device independent manner. It is intriguing in a sense that the local statistics can help to certify a global property of the system (here the genuineness of entanglement) in device independent manner. Before going to our main result we first briefly describe tripartite quantum and no-signalling correlations. 

\emph{Tripartite quantum system}.--- Consider an arbitrary tripartite system $\rho_{ABC}\in\mathcal{D}(\mathbb{C}^{d_A}\otimes\mathbb{C}^{d_B}\otimes\mathbb{C}^{d_C})$ shared among three parties (say) Alice, Bob and Charlie, where $\mathcal{D}(\mathbb{X})$ denotes the set of density operator acting on Hilbert space $\mathbb{X}$ and $d_m$ be the local Hilbert space dimension of the $m^{th}$ party, for $m=A,B,C$. To classify the type of entanglement present in $\rho_{ABC}$ we need to consider all possible pure state decompositions of the given state, i.e., $\rho_{ABC}=\sum_jp_j|\psi^j_{ABC}\rangle\langle\psi^j_{ABC}|$, with $\{p_j\}$ being a valid probability distribution and $|\psi^j_{ABC}\rangle\in\mathbb{C}^{d_A}\otimes\mathbb{C}^{d_B}
\otimes\mathbb{C}^{d_C}$, for all $j$. A state $\rho_{ABC}$ is called fully separable if there exists a decomposition where $\forall~j$ we have $|\psi^j_{ABC}\rangle=|\psi^j_{A}\rangle\otimes|\psi^j_{B}\rangle\otimes|\psi^j_{C}\rangle$. If all $|\psi^j_{ABC}\rangle$ can be
written as either $|\psi^j_{AB}\rangle\otimes|\psi^j_{C}\rangle$ or |$\psi^j_{AC}\rangle\otimes|\psi^j_{B}\rangle$ or $|\psi^j_{BC}\rangle\otimes|\psi^j_{A}\rangle$ and
at least one of them is not a product state, then the state is
entangled, but there is no genuine three-particle entanglement. Such state are also called bi-separable and can be expressed as,
\begin{eqnarray}\label{bi-sep}
\rho_{ABC}^{bi.sep}&=&\underset{i}\sum p_i \rho^i_A \otimes \rho^i_{BC} + \underset{j}\sum q_j \rho^j_B \otimes \rho^j_{AC} + \underset{k}\sum r_k \rho^k_C \otimes \rho^k_{AB},\nonumber\\
&&\mbox{with}~ \sum_{i}p_i+\sum_{j}q_j+\sum_{k}r_k=1.
\end{eqnarray}   
Tripartite quantum state is genuinely entangled if it is neither fully separable nor bi-separable. GHZ class and W class are two canonical examples of genuine entanglement. 

\emph{Tripartite no-signaling correlation}.--- No-signaling (NS) scenario captures more general kind of correlations than obtained in quantum world. Let $x\in\mathcal{X}$, $y\in\mathcal{Y}$, and $z\in\mathcal{Z}$ denote inputs of Alice, Bob, and Charlie and $a\in\mathcal{A}$, $b\in\mathcal{B}$ and $c\in\mathcal{C}$ be their respective  outputs. Here, for our purpose all the inputs and outputs are considered to be two valued (however, the following discussion easily generalizes to higher valued input-output cases). A tripartite input-output probability distribution $P\equiv\{p(abc|xyz)~|~p(abc|xyz)\ge 0~\forall~x,y,z,a,b,c;~\&~\sum_{a,b,c}p(abc|xyz)=1,~\forall~x,y,z\}$ is called a NS probability distribution (denoted as $P^{NS}$) if it satisfies NS constraints, i.e., $\sum_a P(abc|xyz)=P(bc|yz)~\forall~ b,c,x,y,z$, with permutations of the parties, i.e, marginal distributions of any two parties are independent of the input chosen by the other party. A tripartite correlation is called fully local (denoted as $P^L$) if it can be decomposed as $P(abc|xyz)=\sum_\lambda p_\lambda P_\lambda(a|x)P_\lambda(b|y)P_\lambda(c|z)~\forall~a,b,c,x,y,z$, where $\{p_{\lambda}\}$ is again a probability distribution. On the other hand, a tripartite correlation is called two-way local (denoted as $P^{L_2}$) if it satisfies a decomposition of the form $P(abc|xyz)=p_1\sum_\lambda p_\lambda P_\lambda^{AB|C}+p_2\sum_\lambda q_\lambda P_\lambda^{AC|B}+p_3\sum_\lambda r_\lambda P_\lambda^{A|BC}$, where $P_\lambda^{AB|C}=P_\lambda(ab|xy)\,P_\lambda(c|z)$, with $P_{\lambda}(ab|xy)$ being some NS probability distribution between Alice and Bob, and, $P_\lambda^{AC|B}$ and $P_\lambda^{A|BC}$  are  similarly  defined. If a two-way local correlation does not allow a fully local decomposition then we say that it is two-way nonlocal. NS correlations having no two-way local decomposition are called genuinely nonlocal. The set of no-signaling, two-way local, and fully local correlations forms polytopes, respectively denoted as $\mathcal{NS}$, $\mathcal{L}_2$, and $\mathcal{L}$, that follow the strict set inclusion relations $\mathcal{L}\subset\mathcal{L}_2\subset\mathcal{NS}$. A tripartite correlation is called quantum if it can be expressed through the Born's rule, i.e.,
\be
P(abc|xyz)=\mathrm{Tr} \left[\rho_{ABC} \left(M^{x}_{a}\otimes M^{y}_{b}\otimes M^{z}_{c}\right)\right],
\ee
where $\rho_{ABC}\in\mathcal{D}(\mathbb{C}^{d_A}\otimes\mathbb{C}^{d_B}\otimes
\mathbb{C}^{d_C})$, and $\{M^x_a~|~M^x_a\ge 0~\forall~x,a;~\&~\sum_aM^x_a=\mathbf{1}_{d_A}\}$ is positive-operator-valued-measurement (POVM) on Alice's subsystem and similarly $\{M^y_b\}$ and $\{M^z_c\}$ are POVMs on Bob's and Charlie's subsystems. 

Correlations obtained from fully separable quantum states are always fully local and obtained from bi-separable quantum states are always two-way local \cite{Werner'01}. On the other hand, violation of genuine nonlocal inequality certifies presence of genuine nonlocal correlations and it also implies that the quantum state producing the correlation must be genuinely entangled one. Therefore, genuine nonlocal inequalities also suffice as DI-GEW.
Svetlichny's inequality is one such example \cite{SI}. Similarly, violation of two-way nonlocal inequality certifies presence of two-way nonlocal correlations and sufficiently guarantees present of entanglement in the quantum state. More interesting, violation of two-way nonlocal inequality can sometimes certifies presence of genuine entanglement in the state, which establishes the fact that relation between DI-GEW and witnesses of genuine nonlocality is more subtle. For example, consider the Mermin's Bell-type inequality,
\begin{equation}\label{Mermin}
|\langle x_0y_0z_1\rangle+\langle x_0y_1z_0\rangle+\langle x_1y_0z_0\rangle-\langle x_1y_1z_1\rangle|\le 2. 
\end{equation}
Violation of this inequality, i.e.,  $|\langle \mathbb{M}\rangle|:=|\langle x_0y_0z_1\rangle+\langle x_0y_1z_0\rangle+\langle x_1y_0z_0\rangle-\langle x_1y_1z_1\rangle|> 2$, certifies the presence of two-way nonlocal correlation, but it cannot reveals genuine nonlocality \cite{Ban2013}. Interestingly, as shown in \cite{See}, it can witness genuine  entanglement whenever $|\langle M\rangle|>2\sqrt{2}$. In the following, we show that even when the Mermin expression value is equal to $2\sqrt{2}$, a more meticulous analysis of the observed statistics, particularly marginal statistics, can help to certify the presence of genuine entanglement.         

\emph{Result}: We first discuss a lemma which is prerequisite to prove one of our results. 
\begin{Lemma}\label{lemma1}
(Wolf et al. \cite{Wolf2009}): Any two dichotomic quantum measurements (i.e. measurement with two outcomes) are not jointly measurable iff they violate the CHSH inequality.
\end{Lemma}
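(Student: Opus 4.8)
\emph{Proof strategy.} The plan is to encode each two-outcome POVM by a single dichotomic observable and to treat joint measurability as an operator feasibility problem dual to the existence of a CHSH-violating configuration. For setting $x\in\{0,1\}$ I write the two effects as $E_{0|x}=\tfrac12(\mathbbm{1}+A_x)$ and $E_{1|x}=\tfrac12(\mathbbm{1}-A_x)$, so that each measurement is fixed by a Hermitian observable $A_x$ with $-\mathbbm{1}\le A_x\le\mathbbm{1}$. The two measurements are jointly measurable precisely when there is a four-outcome ``mother'' POVM $\{G_{ab}\}_{a,b\in\{0,1\}}$, $G_{ab}\ge0$, $\sum_{ab}G_{ab}=\mathbbm{1}$, whose marginals return the effects, i.e. $\sum_b G_{ab}=E_{a|0}$ and $\sum_a G_{ab}=E_{b|1}$. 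Eliminating the constraints, the whole family is determined by the single operator $G\equiv G_{00}$, and joint measurability reduces to the feasibility of $G\ge0$, $E_{0|0}-G\ge0$, $E_{0|1}-G\ge0$, and $\mathbbm{1}-E_{0|0}-E_{0|1}+G\ge0$.

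First I would dispatch the easy implication, namely that CHSH violation forces incompatibility (equivalently, that compatibility forbids violation). If the mother POVM $\{G_{ab}\}$ exists, then on Alice's wing she can measure $\{G_{ab}\}$ once and, upon being asked for setting $x$, classically output $a$ when $x=0$ and $b$ when $x=1$. The joint outcome $(a,b)$ thus plays the role of a local hidden variable, so for every bipartite state and every pair of Bob observables the resulting behaviour $P(a'b'|xy)$ admits a local decomposition on the $A|B$ cut. A local behaviour satisfies every $2$-input $2$-output Bell inequality, CHSH in particular; hence no violation is possible. This establishes ``violate CHSH $\Rightarrow$ not jointly measurable''.

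The substantive direction is the converse: from non--joint-measurability I must \emph{construct} a bipartite state $\rho_{AB}$ and Bob observables $B_0,B_1$ (with $-\mathbbm{1}\le B_y\le\mathbbm{1}$) giving $|\langle A_0\otimes B_0+A_0\otimes B_1+A_1\otimes B_0-A_1\otimes B_1\rangle|>2$. The feasibility problem above is a semidefinite program; when it is infeasible, SDP duality (equivalently a Hahn--Banach separating hyperplane between the admissible $G$ and the constraint cone) produces a Hermitian certificate, and the programme is to read off $B_0,B_1$ and $\rho_{AB}$ directly from this certificate so that the CHSH value exceeds $2$ by exactly the amount by which feasibility fails. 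A more hands-on route, which I would run in parallel as a cross-check, is to invoke a two-projections (Jordan-type) lemma to block-diagonalise $A_0$ and $A_1$ simultaneously into blocks of dimension at most two: incompatibility then survives in at least one qubit block, where the explicit Busch joint-measurability criterion for qubit effects fails, and there one writes down a two-qubit entangled state together with the optimal Bob directions (a Horodecki/Tsirelson optimisation) to beat $2$; embedding this configuration back into the full Hilbert space, with $\rho_{AB}$ and Bob's effects supported on that block, yields the global violation.

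I expect the converse to be the main obstacle, for two reasons. First, one must guarantee that the object extracted from the infeasibility certificate genuinely defines admissible quantum data---Bob's operators must satisfy $0\le E^B_{b|y}\le\mathbbm{1}$ and $\rho_{AB}$ must be a bona fide density matrix---rather than merely a formal witness; matching the sign and normalisation so that the separating functional coincides with the CHSH form is the delicate point. Second, the statement is dimension-free and allows \emph{biased} effects, so the qubit reduction has to be carried out for general effects rather than sharp projectors, and one must verify that joint measurability of every block is equivalent to joint measurability of the whole, so that a single incompatible block suffices to manufacture the violation. Once these are in place, tightness of the threshold $2$ follows because the construction saturates the feasibility gap, completing the equivalence.
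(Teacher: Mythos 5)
A preliminary remark: the paper does not prove this lemma at all---it is imported verbatim from Wolf, Perez-Garcia and Fernandez \cite{Wolf2009}, and the surrounding text merely paraphrases their result before using it in Theorem~\ref{theorem1}. So your sketch has to be judged as a standalone proof of the cited theorem, not against anything in this paper. Your setup is correct: joint measurability of $\{E_{a|0}\}$ and $\{E_{b|1}\}$ is equivalent to feasibility of $G\ge 0$, $E_{0|0}-G\ge 0$, $E_{0|1}-G\ge 0$, $\mathbb{1}-E_{0|0}-E_{0|1}+G\ge 0$, and your easy direction is complete as written---the mother POVM turns Alice's wing into classical post-processing of a single measurement, so the behaviour is local across the $A|B$ cut and no CHSH violation is possible.

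The converse---incompatibility implies a constructible CHSH violation---is the entire content of the theorem, and neither of your routes establishes it. The Jordan-lemma route fails at its first step: the two-projections theorem block-diagonalises a pair of \emph{projections} into $1\times 1$ and $2\times 2$ blocks, but for a pair of general effects (unsharp, possibly biased POVM elements, which is exactly the generality the lemma demands) no such common block structure exists; two generic Hermitian contractions in dimension $d\ge 3$ have no nontrivial common invariant subspace, so there is no qubit block in which to apply Busch's criterion, and the subsequent steps never get started. You flag this yourself, but it is not a repairable technicality---it kills that route. The SDP-duality route is the right kind of idea, but it is left as a programme precisely at the point where the theorem lives: nothing in your sketch shows how an infeasibility certificate (four positive multipliers plus a normalisation) gets converted into two admissible Bob observables $B_0,B_1$ and a bona fide density operator $\rho_{AB}$ whose CHSH pairing exceeds $2$; saying that ``the separating functional coincides with the CHSH form'' is the claim to be proven, not a step of its proof. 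For the record, the actual argument of Wolf et al.\ is constructive and pivots on a different identity: for fixed Alice observables $A_x=2E_{0|x}-\mathbb{1}$, the optimal CHSH value over all states and Bob observables equals $\bigl\|\,|A_0+A_1|+|A_0-A_1|\,\bigr\|$, and this norm is at most $2$ exactly when the operator $G$ above exists, both implications being shown by explicit constructions built from the polar decompositions of $A_0\pm A_1$. Without that quantitative bridge---or an executed version of your duality step---the ``only if'' direction, and hence the lemma, remains unproven.
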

The authors in \cite{Wolf2009} have shown that for a pair of non jointly measurable observables (dichotomic) there exists a bipartite quantum state and a set of observables for an added site together with which the given observables violate a Bell inequality. They further argued that such a pair of incompatible quantum measurements cannot be measured jointly in any other NS theory. we use this result to prove the following theorem. 
\begin{theorem}\label{theorem1}
Bi-separable quantum states with atleast one of the local dimensions equal to two, and achieving the Mermin value $|\langle \mathbb{M} \rangle| = 2\sqrt{2}$ are of the form $\rho_{ABC} = |\psi_A\rangle\langle\psi_A| \otimes |\mathcal{B}_{BC}\rangle\langle\mathcal{B}_{BC}|$ (upto local isometry), where $|\mathcal{B}_{BC}\rangle$ is a bipartite maximally entangled states that maximizes $\langle CHSH\rangle_{BC}$, or analogous form with respect to party-permutation.
\end{theorem}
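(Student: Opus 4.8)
The plan is to reduce the tripartite Mermin expression to a pair of bipartite CHSH expressions on the ``paired'' cut, apply Tsirelson's bound termwise, and then extract rigidity from the equality condition, using Lemma~\ref{lemma1} to forbid contributions from more than one biseparable cut. First I would regroup the Mermin operator around Alice's observables, writing $\mathbb{M} = A_0\,\beta_0 + A_1\,\beta_1$ with $\beta_0 = B_0C_1 + B_1C_0$ and $\beta_1 = B_0C_0 - B_1C_1$, where $A_i,B_i,C_i$ are the $\pm1$-valued observables for settings $x_i,y_i,z_i$. The crucial algebraic observation is that $\beta_0+\beta_1$ and $\beta_0-\beta_1$ are exactly the two CHSH operators for the pair $BC$. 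For a product term $\rho_A\otimes\rho_{BC}$ one has $\langle\mathbb{M}\rangle = \langle A_0\rangle\langle\beta_0\rangle + \langle A_1\rangle\langle\beta_1\rangle$ with $|\langle A_i\rangle|\le1$, so $|\langle\mathbb{M}\rangle|\le |\langle\beta_0\rangle|+|\langle\beta_1\rangle| = \max\big(|\langle\mathrm{CHSH}\rangle_{BC}|,|\langle\mathrm{CHSH}'\rangle_{BC}|\big)\le 2\sqrt2$ by Tsirelson's bound; by linearity the same bound holds across the $A|BC$ cut, and by the manifest party symmetry of $\mathbb{M}$ (regroup around $B$ or $C$) for the other two cuts.

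Next I would analyze the equality $|\langle\mathbb{M}\rangle| = 2\sqrt2$. Since a biseparable state is a convex mixture over the three cuts and each term is bounded by $2\sqrt2$, every term of nonzero weight must individually saturate the bound. Saturation of an $A|BC$ term forces two things at once: (i) $\langle A_0\rangle = \langle A_1\rangle = \pm1$, so $\rho_A$ is a common eigenstate of $A_0$ and $A_1$; and (ii) $\rho_{BC}$ attains the maximal CHSH value $2\sqrt2$. Note that (i) is genuinely required for both observables, because $|\langle\beta_i\rangle|\le 2<2\sqrt2$ forces both $\langle\beta_i\rangle\neq0$ at the optimum (indeed $\langle\beta_0\rangle=\langle\beta_1\rangle=\sqrt2$), ruling out a degenerate saturation with a single $\beta_i$. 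By the self-testing/rigidity property of the maximal CHSH violation, condition (ii) fixes $\rho_{BC}$, up to a local isometry, to be the two-qubit maximally entangled state $|\mathcal{B}_{BC}\rangle$ that maximizes $\langle\mathrm{CHSH}\rangle_{BC}$, and it forces both $\{B_0,B_1\}$ and $\{C_0,C_1\}$ to be incompatible (non jointly measurable).

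Here Lemma~\ref{lemma1} does the decisive work. Let the party whose local dimension equals two sit on the single side---say Alice with $d_A=2$. Then (i) says a single qubit is a common eigenvector of $A_0$ and $A_1$, which on a qubit is possible only if $A_0=\pm A_1$; hence $A_0,A_1$ are trivially jointly measurable. By Lemma~\ref{lemma1}, jointly measurable observables cannot produce any CHSH violation, so Alice's settings can never maximally violate CHSH on the $AB$ or $AC$ pair. Consequently no $B|AC$ or $C|AB$ term can reach $2\sqrt2$, so all such terms carry zero weight and only the $A|BC$ cut survives. Combining this with the rigidity of (ii)---which makes every surviving $\rho_{BC}$ the same pure $|\mathcal{B}_{BC}\rangle$, and, since on a qubit the common $+1$ eigenstate of $A_0=\pm A_1$ is unique, every $\rho_A$ the same pure $|\psi_A\rangle$---collapses the mixture to the single pure product $|\psi_A\rangle\langle\psi_A|\otimes|\mathcal{B}_{BC}\rangle\langle\mathcal{B}_{BC}|$. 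The permuted forms follow by relabelling the dimension-two party.

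The main obstacle I anticipate is the rigidity step: turning a saturated CHSH value into the statement that $\rho_{BC}$ is, up to local isometry, exactly the maximally entangled two-qubit state requires the self-testing characterization of the Tsirelson bound, and one must handle the ``junk'' subsystem carefully so that the isometry statement in the theorem is honest. The rest of the argument is structurally clean: the mutual exclusivity of the three cuts rests entirely on the qubit fact ``common eigenstate $\Rightarrow$ parallel observables $\Rightarrow$ joint measurability,'' which is precisely where the dimension-two hypothesis enters and why Lemma~\ref{lemma1} is indispensable for converting incompatibility into the impossibility of competing cuts.
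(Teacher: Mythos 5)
Your proof skeleton coincides with the paper's: the same regrouping of $\mathbb{M}$ into CHSH pieces across each cut, the same termwise $2\sqrt{2}$ bound, the same saturation conditions (unit single-party expectations on the lone side, maximal CHSH on the paired side), and the same two external inputs, Lemma~\ref{lemma1} and CHSH self-testing. Your derivation of the bound, $|\langle\beta_0\rangle|+|\langle\beta_1\rangle|=\max\left(|\langle \mathrm{CHSH}\rangle|,|\langle \mathrm{CHSH}'\rangle|\right)\le 2\sqrt{2}$, is actually sounder than the paper's corresponding step (the paper asserts that $|U|$ and $|V|$ are each at most $2\sqrt{2}$, which is false --- each can reach $4$; only their half-sum obeys the $2\sqrt{2}$ bound, which is exactly your max identity). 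The genuine gap is the sentence ``Let the party whose local dimension equals two sit on the single side---say Alice.'' That is not a without-loss-of-generality reduction: the hypothesis fixes which party is a qubit, but it does not fix on which side of the saturating cuts that party appears. Your Lemma~\ref{lemma1} conflict arises only when the qubit party is forced to be simultaneously lone (common eigenstate, hence $A_0=\pm A_1$, hence compatible) and paired (incompatible, by rigidity). This kills coexistence of $A|BC$ with $B|AC$, and of $A|BC$ with $C|AB$, but it says nothing about a mixture supported on $B|AC$ and $C|AB$ alone, where Alice sits inside the entangled pair in both terms and merely needs incompatible observables. For Bob and Charlie the qubit fact is unavailable: in dimension $\ge 3$ a common eigenvector does not imply joint measurability.

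This missing case is fatal, not merely unfinished. Take $d_A=2$, $d_B=d_C=3$; let $x_0=\sigma_x$, $x_1=\sigma_z$, and let $y_n=\tilde y_n\oplus(+1)$, $z_n=\tilde z_n\oplus(+1)$ act on $\mathbb{C}^2\oplus\mathbb{C}$ with $\tilde y_0=\tilde z_0=(\sigma_x+\sigma_z)/\sqrt{2}$, $\tilde y_1=\tilde z_1=(\sigma_x-\sigma_z)/\sqrt{2}$; these observables are incompatible yet share the eigenvector $|2\rangle$. Grouping $\mathbb{M}$ around Bob, the state $|2\rangle\langle 2|_B\otimes|\mathcal{B}_{AC}\rangle\langle\mathcal{B}_{AC}|$ yields $\langle x_0(z_0+z_1)+x_1(z_0-z_1)\rangle_{\mathcal{B}_{AC}}=2\sqrt{2}$ for a Bell state on Alice's qubit and Charlie's block; symmetrically, grouping around Charlie, $|2\rangle\langle 2|_C\otimes|\mathcal{B}_{AB}\rangle\langle\mathcal{B}_{AB}|$ attains $2\sqrt{2}$ with the same $x_n$. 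Any proper mixture of the two is biseparable, satisfies the one-qubit hypothesis, has Mermin value exactly $2\sqrt{2}$, and is a rank-two mixed state, hence not of the theorem's product form under any local isometry. So your gap cannot be closed under the stated hypothesis; excluding the coexistence of $B|AC$ and $C|AB$ requires a second qubit party. (The paper's own proof slides over the same point: it discards that coexistence by reading ``$\langle y_n\rangle^j=1$'', respectively ``$\langle z_n\rangle^k=1$'', as ``jointly measurable,'' which is precisely the qubit implication applied to parties not assumed to be qubits.) A smaller hole of the same kind sits in your final collapse step: within the surviving $A|BC$ cut, terms with $\langle A_n\rangle=-1$ paired with a sign-flipped Bell state (CHSH $=-2\sqrt{2}$) saturate equally well, so uniqueness of the $+1$ eigenstate alone does not force the decomposition down to a single pure product.
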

\begin{proof}
The Mermin operator of Eq.(\ref{Mermin}) can be rewritten as $M^{A:BC} = \frac{1}{2}[x_0 \otimes(CHSH_{BC} - CHSH^{'}_{BC}) + x_1 \otimes(CHSH_{BC} + CHSH^{'}_{BC})]$, where $CHSH_{BC} = y_1 z_1 + y_1 z_0 + y_0 z_1 - y_0z_0$ is the canonical Clauser-Horne-Shimony-Holt (CHSH) operator and $CHSH^{'}_{BC}$ is the same expression as $CHSH_{BC}$ with indices $0$ and $1$ interchanged. Similarly, Mermin operator can also be expressed in terms of$M^{B:AC}$ and $M^{C:AB}$ defined analogously. The expectation value of the Mermin operator (left hand side of Eq.(\ref{Mermin})) with respect to the bi-separable state (of Eq.(\ref{bi-sep})) thus become,
	\begin{widetext}
\begin{eqnarray}\label{bi1}
|\langle M \rangle| &=& |\frac{1}{2}(Tr(M^{A:BC}\underset{i}\sum p_i \rho^i_A \otimes \rho^i_{BC})
+Tr(M^{B:AC}\underset{j}\sum q_j \rho^j_B \otimes \rho^j_{AC})+Tr(M^{C:AB}\underset{i}\sum r_k \rho^k_C \otimes \rho^k_{AB}))| \nonumber\\
&=& |\underbrace{\frac{1}{2} (\underset{i}\sum p_i(\langle x_0 \rangle^i U^i_{BC} + \langle x_1 \rangle^i V^i_{BC}))}+\underbrace{\frac{1}{2} (\underset{j}\sum q_j(\langle y_0 \rangle^j U^j_{AC} + \langle y_1 \rangle^j V^j_{AC}))}+ \underbrace{\frac{1}{2}(\underset{k}\sum r_k(\langle z_0 \rangle^k U^k_{AB} + \langle z_1 \rangle^k V^k_{AB}))}|, 
\end{eqnarray}
		\end{widetext}
where $\langle x_0\rangle^i$ is the expectation value on the state $\rho_A^i$ and other single party expectations are defined analogously, and $U^i_{BC} = (\langle CHSH \rangle^i_{BC} - \langle CHSH \rangle^{'i}_{BC})$	and $V^i_{BC} = (\langle CHSH \rangle^i_BC + \langle CHSH \rangle^{'i}_{BC})$ with expectation defined on the state $\rho^i_{BC}$, and other $U$'s and $V$'s having analogous expressions. The maximum value that each of $|U^m_{l}|$ and $|V^m_{l}|$ $(l= BC,AC, AB; m = i, j, k)$ can achieve is $2\sqrt{2}$. This is because, the quantum measurement-setup giving maximum value (i.e. $2\sqrt{2}$) for CHSH (CHSH') gives zero value to CHSH' (CHSH) \cite{Su'2016}. So each of the three terms (individually) in the above expression can give the maximum value $2\sqrt{2}$ when each of $\langle x_n \rangle^i$, $\langle y_n \rangle^j$, $\langle z_n \rangle^k$ ($n=0,1$) is equal to $1$. Let us consider the first term on the right hand side of Eq.(\ref{bi1}). By Lemma-\ref{lemma1}, to violate CHSH local bound the measurements for both Bob and Charlie should be non-jointly measurable. On the other hand, for the optimal contribution from the second term in Eq.(\ref{bi1}) one must have $\langle y_n \rangle^j=1$ ($n=0,1$) i.e. the measurements on Bob's side should be jointly measurable, which can not be used to violate CHSH local bound. Similar arguments considering Alice and Charlie's measurement setting leads us to the following possibilities- either ($p_i=0$, $q_j=0$, $r_k\ne 0$) or ($p_i=0$, $q_j\ne 0$, $r_k=0$) or ($p_i\ne 0$, $q_j=0$, $r_k=0$). Without loss of generality we take ($p_i\ne 0$, $q_j=0$, $r_k=0$). Further, to achieve the maximal Mermin violation we must also have any one of the $p_i$'s be $1$ and $\rho^i_{BC}=|\mathcal{B}_{BC}\rangle\langle\mathcal{B}_{BC}|$, where $|\mathcal{B}_{BC}\rangle$ is a bipartite maximally entangled state that maximizes $\langle CHSH\rangle _{BC}$. To say more precisely, the maximal quantum value of CHSH is obtained by maximally entangled state of two qubits or a state equivalent to its local isometry \cite{McKague}.   

To see the requirement of mentioning the local dimension (of one of the subsystems), consider a tripartite probability distribution $\{p(abc|xyz)\}$ with Mermin value $2\sqrt{2}$ obtained from the quantum state $\rho_{ABC}=|\psi_A\rangle\langle\psi_A|\otimes |\mathcal{B}_{BC}\rangle\langle\mathcal{B}_{BC}|$ under some appropriate measurement choices (say) $\{x_0,x_1\}$ for Alice, $\{y_0,y_1\}$ for Bob, and $\{z_0,z_1\}$ for Charlie. Consider another distribution $\{p'(abc|xyz)\}$ with Mermin value $2\sqrt{2}$ and having a different quantum realization $\rho'_{ABC}=|\mathcal{B}_{AB}\rangle\langle\mathcal{B}_{AB}|\otimes|\psi_C\rangle\langle\psi_C|$ under some appropriate `primed' measurement choices. The probability distribution $\{k p(abc|xyz)+(1-k)p'(abc|xyz)~|~0<k<1\}$ allows a bi-separable quantum realization from the state $\sigma_{ABC}=k\rho_{ABC}\oplus(1-k)\rho'_{ABC}$ with a measurement setup $\{x_0\oplus x'_0,x_1\oplus x'_1\}$ for Alice and similarly for Bob and Charlie and the corresponding Mermin value is $2\sqrt{2}$. Clearly $\sigma_{ABC}$ is more general than the expression stated in the Theorem-\ref{theorem1}. However, if we restrict the local dimension of one of the subsystems then this generalized form of the state is no more allowed.
\end{proof}
According to the above theorem, for any quantum state (with one of the local subsystems dimension two), if it gives Mermin violation $2\sqrt{2}$, then the marginal statistics of two parties must be completely random. Therefore the local marginals can be used in witnessing the presence of genuine entanglement in semi-device independent manner. \\  
\emph{\bf Corollary}: Non-maximally mixed marginals for any of the two parties of a tripartite quantum correlation with Mermin  violation $\mathcal{M}=2\sqrt{2}$ certify genuine tripartite entanglement when at least one of the local Hilbert space dimension is two.

For explicit example, consider a noisy W-state, $\rho_v=v|W\rangle\langle W|+(1-v)\mathbb{1}/2\otimes\mathbb{1}/2$, where $|W\rangle=1/\sqrt{3}(|001\rangle+|010\rangle+|100\rangle)$. With certain noisy parameter values and suitably chosen measurements the tripartite correlations gives Mermin value $2\sqrt{2}$, with all the local expectation taking nonzero values (see Appendix.\ref{appendix1}). Similar example one can also contract from generalized-GHZ state, $|\psi_{GGHZ}\rangle=\cos\theta|000\rangle+\sin\theta|111\rangle$, with $0<\theta\le\pi/4$ (see Appendix.\ref{appendix2}). 

Consider now a quantity $Q$, made of from different Mermin polynomials, defined as follows, $Q:=\min \{Q_1,...,Q_9\}$, 
where $Q_1=|||\mathcal{M}_{000}-\mathcal{M}_{001}|-|\mathcal{M}_{010}-\mathcal{M}_{011}||-||\mathcal{M}_{100}-\mathcal{M}_{101}|
-|\mathcal{M}_{110}-\mathcal{M}_{111}|||$, and other $Q_i$'s are obtained by permutations. Here, $\mathcal{M}_{\alpha\beta\gamma}$ are different Mermin polynomials appearing in different Mermin inequalities, i.e., $\mathcal{M}_{\alpha\beta\gamma}:=
(\alpha\oplus\beta\oplus\gamma+1)\mathcal{M}_{\alpha\beta\gamma}^+
+(\alpha\oplus\beta\oplus\gamma)\mathcal{M}_{\alpha\beta\gamma}^-\le 2$, where $\mathcal{M}_{\alpha\beta\gamma}^+:=(-1)^{\gamma}\langle x_0y_0z_1\rangle+(-1)^{\beta}\langle x_0y_1z_0\rangle+(-1)^{\alpha}\langle x_1y_0z_0\rangle+(-1)^{\alpha\oplus\beta\oplus\gamma\oplus 1}\langle x_1y_1z_1\rangle$, and $\mathcal{M}_{\alpha\beta\gamma}^-:=
(-1)^{\alpha\oplus\beta\oplus1}\langle x_1y_1z_0\rangle+(-1)^{\alpha\oplus\gamma\oplus 1}\langle x_1y_0z_1\rangle+(-1)^{\beta\oplus\gamma\oplus 1}\langle x_0y_1z_1\rangle+\langle x_0y_0z_0\rangle$. Properties of this quantity and its use in studying \emph{non-classicality} of tripartite correlations have been discussed in \cite{Jebaratnam}. From the proof of Theorem-1 it becomes that the most general bi-separable states giving Mermin value $2\sqrt{2}$ and having at least two of the marginals non-maximally mixed, is of the form $\sigma^{bi.sep}_{ABC}=k_1|\psi_A\rangle\langle\psi_A|\otimes |\mathcal{B}_{BC}\rangle\langle\mathcal{B}_{BC}|+k_2|\psi_B\rangle\langle\psi_B|\otimes |\mathcal{B}_{AC}\rangle\langle\mathcal{B}_{AC}|+|k_3\psi_C\rangle\langle\psi_C|\otimes |\mathcal{B}_{AB}\rangle\langle\mathcal{B}_{AB}|$, and hence the corresponding tripartite probability distribution is of the form $P^{L_2}=k_1D_A\times PR^{\lambda}_{BC}+k_2D_B\times PR^{\lambda}_{AC}+k_2D_C\times PR^{\lambda}_{AB}$, with $\lambda=1/\sqrt{2}$, where $D$ denotes some deterministic box for single party and $PR^{\lambda}$ denotes noisy Popescu-Rohrlich (PR) correlation \cite{PR94}, i.e., $PR^{\lambda}:=\lambda PR+(1-\lambda)\mathcal{I}$, with $\mathcal{I}$ denoting the white noise \cite{self1}. Therefore, for any such correlation $P^{L_2}$, two of the Mermin polynomials $\mathcal{M}_{\alpha\beta\gamma}$ are equal while the rest are zero, which imply $Q$ to be zero. The implication of the above discussion can be expressed as the following theorem \cite{self2}.
\begin{theorem}\label{theorem2}
Non-maximally mixed marginals for any of the two parties of a tripartite quantum correlation with Mermin  violation $\mathbb{M}=2\sqrt{2}$ and $\mathcal{Q}>0$ certify genuine tripartite entanglement in a fully device-independent
way. 
\end{theorem}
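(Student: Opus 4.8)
The plan is to prove the theorem as a separating-functional statement: no bi-separable quantum state, of any local dimension, can produce a correlation that simultaneously has $\mathbb{M}=2\sqrt{2}$, two non-maximally-mixed marginals, and $\mathcal{Q}>0$. Equivalently, I aim at the implication that every bi-separable quantum correlation with $\mathbb{M}=2\sqrt{2}$ and at least two non-random marginals already forces $\mathcal{Q}=0$; its contrapositive is exactly the assertion of the theorem. Because $\mathbb{M}$, $\mathcal{Q}$, and the randomness of the marginals are all functionals of the observed statistics $\{p(abc|xyz)\}$ alone, with no reference to the underlying Hilbert-space dimension or to the measurements performed, the resulting certification of genuine tripartite entanglement is fully device independent; this is precisely the feature that upgrades the semi-device-independent Corollary to a fully device-independent statement. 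The regime $\mathbb{M}>2\sqrt{2}$ being already settled in \cite{See}, the only new content lies at the boundary value $\mathbb{M}=2\sqrt{2}$.

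First I would pin down the most general bi-separable realization compatible with $\mathbb{M}=2\sqrt{2}$ and two non-random marginals. Reusing the bipartition identity $M^{A:BC}=\tfrac{1}{2}[x_0\otimes(CHSH_{BC}-CHSH'_{BC})+x_1\otimes(CHSH_{BC}+CHSH'_{BC})]$ from the proof of Theorem~\ref{theorem1}, each of the three bipartition terms in Eq.~(\ref{bi1}) contributes at most $2\sqrt{2}$, and saturating a single term forces the corresponding single-party marginal to be deterministic ($\langle x_n\rangle=\pm1$) while its bipartite factor saturates one CHSH expression and annihilates its partner $CHSH'$; by Lemma~\ref{lemma1} and the rigidity of maximal CHSH violation \cite{McKague} this factor is, up to local isometry, a two-qubit maximally entangled state. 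Allowing convex mixtures across the three bipartitions then produces the canonical three-term form $\sigma^{bi.sep}_{ABC}=k_1|\psi_A\rangle\langle\psi_A|\otimes|\mathcal{B}_{BC}\rangle\langle\mathcal{B}_{BC}|+k_2|\psi_B\rangle\langle\psi_B|\otimes|\mathcal{B}_{AC}\rangle\langle\mathcal{B}_{AC}|+k_3|\psi_C\rangle\langle\psi_C|\otimes|\mathcal{B}_{AB}\rangle\langle\mathcal{B}_{AB}|$, whose statistics are the convex combination $P^{L_2}=k_1 D_A\times PR^{\lambda}_{BC}+k_2 D_B\times PR^{\lambda}_{AC}+k_3 D_C\times PR^{\lambda}_{AB}$ with $\lambda=1/\sqrt{2}$.

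Next I would evaluate $\mathcal{Q}$ on $P^{L_2}$. For a single extreme term $D\times PR^{1/\sqrt{2}}$, the bipartite factor saturates one CHSH polynomial and gives zero to the conjugate one; substituting this into the definitions of $\mathcal{M}^{\pm}_{\alpha\beta\gamma}$ shows that only the two Mermin polynomials aligned with that bipartition survive and that they take equal values, while the remaining six vanish identically. Feeding this two-equal/six-zero pattern into the nested absolute values defining each $Q_i$, inner differences such as $|\mathcal{M}_{000}-\mathcal{M}_{001}|$ cancel pairwise, so at least one $Q_i=0$ and hence $\mathcal{Q}=\min\{Q_1,\dots,Q_9\}=0$, consistently with the properties of this quantity established in \cite{Jebaratnam}. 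Because $\mathcal{Q}$ is built from nested moduli and is therefore not linear in $P$, the final bookkeeping step is to confirm that this vanishing survives the convex combination of the three extreme terms; here the permutation-minimisation built into the definition of $\mathcal{Q}$ is exactly what guarantees that one of the nine $Q_i$ still evaluates to zero for the full mixture $P^{L_2}$.

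The main obstacle is the exhaustiveness of the three-term characterisation in arbitrary dimension. In Theorem~\ref{theorem1} the hypothesis that one local dimension equals two was precisely what excluded the direct-sum construction $\sigma_{ABC}=k\rho_{ABC}\oplus(1-k)\rho'_{ABC}$ exhibited there; dropping that hypothesis, I must instead show that any higher-dimensional bi-separable state attaining $\mathbb{M}=2\sqrt{2}$ with non-random marginals reduces, block by block, to the qubit maximally-entangled form. The tool is the Jordan-block decomposition of each pair of dichotomic observables on the relevant bipartite factor, which confines the attainment of the maximal CHSH value to effective two-qubit sectors, together with the self-testing rigidity invoked above; this guarantees that even the direct-sum states land inside the convex hull of $D\times PR^{1/\sqrt{2}}$ boxes, all of which carry $\mathcal{Q}=0$. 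Securing this reduction cleanly, without the dimension hypothesis and uniformly over all admissible block structures, is the delicate part of the argument; once it is in place, the $\mathcal{Q}=0$ computation and the device-independent conclusion follow immediately.
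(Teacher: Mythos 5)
Your proposal follows essentially the same route as the paper: it reuses the Theorem~\ref{theorem1} machinery to reduce bi-separable correlations with Mermin value $2\sqrt{2}$ and two non-random marginals to the convex combination $P^{L_2}=k_1 D_A\times PR^{1/\sqrt{2}}_{BC}+k_2 D_B\times PR^{1/\sqrt{2}}_{AC}+k_3 D_C\times PR^{1/\sqrt{2}}_{AB}$, shows $\mathcal{Q}=0$ on that set, and concludes by contraposition, which is exactly the paper's argument. The two points you flag as delicate --- the non-linearity of $\mathcal{Q}$ under mixing and the dimension-free exhaustiveness of the three-term form --- are precisely the steps the paper asserts without detailed proof (the mixing issue is handled there by noting, in a footnote, that every term in the decomposition must align with the one Mermin polynomial attaining $2\sqrt{2}$), so your treatment is, if anything, more explicit than the published one.
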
 

For the example of correlation given in Appendix.\ref{appendix1}  we find that $\mathcal{Q}>0$, which means that this correlation can not come from any bi-separable states even if there is no restriction on the dimensions of the local Hilbert spaces. 

\emph{Discussions}: In conclusion, we have shown how local marginals can help to detect tripartite entanglement in a semi-device independent and device independent manner. At this point a comparative discussion with a recent work \cite{Jebaratnam'2016} is relevant. In \cite{Jebaratnam'2016}, the author addressed the question of detecting genuine multipartite entanglement in steering scenario and introduced genuine steering inequalities.  Violation of any of these inequalities certifies genuine tripartite entanglement in semi-device-independent manner where the local Hilbert space dimension of two parties are assumed to be two. In comparison to this our semi-device independent scenario assumes dimension of only one of the subsystems. Another recent work \cite{Walter'13} by Walter et al. is also quite worthy to mention here. They have introduced a geometrical object, called \emph{entanglement polytope} which characterizes the eigenvalues of the single-particle states in any given class of entanglement. In case of pure, multi-particle quantum states, the authors have shown that the features of the global entanglement can be extracted from its local information alone. Unlike the present paper, the approach in \cite{Walter'13} is not device-independent. However, extending this idea in device-independent framework may be an interesting research direction.

\appendix

\section{Example- Noisy W state}\label{appendix1}
Here we take the example of a genuinely entangled state, which has optimal Mermin violation $2\sqrt{2}$ and demonstrate that one can witness the `genuineness' in entanglement by looking at the local statistics. 
Suppose, Alice, Bob and Charlie perform the projective qubit measurements $\{A_i\}_{i=0,1}$, $\{B_j\}_{j=0,1}$, and $\{C_k\}_{k=0,1}$. The noisy $W$ state with $v= 0.928585$ achieves the maximum Mermin value $2\sqrt{2}$ for the following measurement settings: 
\begin{align}
A_0 &=& -0.778908 \sigma_x - 0.219856 \sigma_y + 0.587337 \sigma_z,\nonumber\\ A_1 &=& 0.389816 \sigma_x + 0.11003 \sigma_y + 0.914296 \sigma_z,\nonumber\\ B_0 &=& -0.778908 \sigma_x - 0.219856 \sigma_y + 0.587337 \sigma_z,\nonumber\\ B_1 &=& 0.389816 \sigma_x + 0.11003 \sigma_y + 0.914296 \sigma_z,\nonumber\\ C_0 &=& -0.778908 \sigma_x - 0.219856 \sigma_y + 0.587337 \sigma_z,\nonumber\\ C_1 &=& 0.389816 \sigma_x + 0.11003 \sigma_y + 0.914296 \sigma_z\nonumber.
\end{align} 
But for the above measurement settings, all local expectation values are non-zero: $\langle A_0\rangle=\langle B_0\rangle=\langle C_0\rangle=0.18599$,
$\langle A_1\rangle=\langle B_1\rangle=\langle C_1\rangle=0.289527$. We also find that $\mathcal{Q}>0$ for this correlation which enables device independent certification of genuine entanglement.
\par 
Now one might wonder whether genuineness of entanglement can be certified in a device independent manner for the states with $v>0.928585$. Surely these states achieve Mermin violation greater than $2\sqrt{2}$ under suitable measurement settings and thus are detectable under the existing notion of DI GME using MI. But without optimizing the Mermin violation, if a state with $v>0.928585$ achieves Mermin violation equal to $2\sqrt{2}$ for some measurement settings, even then the non-vanishing local expectation values can certify genuine entanglement in a device independent way. 

\section{Example- GGHZ state}\label{appendix2}
Here we also demonstrate our claim with the example of GGHZ states.
The GGHZ state with $\theta = 0.4077$ gives Mermin violation $2\sqrt{2}$ for the following measurement settings:
\begin{eqnarray} 
A_0 &=& 0.0988727 \sigma_x + 0.980477 \sigma_y + 0.169967 \sigma_z,\nonumber\\ A_1 &=& 0.980477 \sigma_x  - 0.0988727 \sigma_y + 0.169967 \sigma_z,\nonumber\\ B_0 &=& 0.116457 \sigma_x + 0.978544 \sigma_y +  0.169967 \sigma_z,\nonumber\\ B_1 &=& 0.978544 \sigma_x - 0.116457 \sigma_y + 0.169967 \sigma_z,\nonumber\\ C_0 &=& -0.976125 \sigma_x -0.21721 \sigma_y,\nonumber\\ C_1 &=& -0.217125 \sigma_x + 0.975742 \sigma_y -0.0280037 \sigma_z.\nonumber
\end{eqnarray} 
Local expectation values of Alice and Bob are nonzero for the above measurement settings. 

{\bf Acknowledgment:} We would like to gratefully acknowledge fruitful discussions with Guruprasad Kar. MB thankfully acknowledges discussion with Sibasish Ghosh. AM acknowledges support from the CSIR project 09/093(0148)/2012-EMR-I. CJ acknowledges support through the Project SR/S2/LOP-08/2013 of the DST, Govt. of India.

\end{document}